\documentclass[lettersize,journal]{IEEEtran}
\usepackage{amsmath,amsfonts}
\usepackage{algorithmic}
\usepackage{algorithm}
\usepackage{array}
\usepackage[caption=false,font=footnotesize]{subfig}
\usepackage{textcomp}
\usepackage{stfloats}
\usepackage{url}
\usepackage{verbatim}
\usepackage{graphicx}
\usepackage{cite}

\usepackage{mathtools}
\usepackage{siunitx}
\usepackage{bbm}
\usepackage{booktabs}

\newcommand{\frealp}{\mathbb{R}_{\geq 0}}
\newcommand{\fnatural}{\mathbb{N}}

\newcommand{\lEsNO}{E_\textnormal{s} / N_0}
\newcommand{\lEbNO}{E_\textnormal{b} / N_0}
\newcommand{\ztrafo}[1]{\mathcal{Z}\{#1\}(z)}

\newcommand{\symcell}{g}
\newcommand{\bitcell}{b}
\newcommand{\numbitpermodsymb}{\ell}

\newcommand{\rb}{\mathbb{R}}

\newcommand{\cc}{\mathcal{C}}

\newcommand{\ind}{\mathbbm{1}}

\makeatletter
\def\smallunderbrace#1{\mathop{\vtop{\m@th\ialign{##\crcr
   $\hfil\displaystyle{#1}\hfil$\crcr
   \noalign{\kern3\p@\nointerlineskip}%
   \tiny\upbracefill\crcr\noalign{\kern3\p@}}}}\limits}
\makeatother

\newcommand{\ccc}{\cc_\text{c}}
\newcommand{\cccb}{\cc_\text{c,b}}

\DeclareMathOperator{\argmax}{argmax}

\usepackage{bm}

\usepackage{amsthm}
\newtheorem{theorem}{Theorem}[section]

\usepackage{enumitem}

\usepackage{acro}

\usepackage{nicefrac}

\DeclareAcronym{TPC}{
  short = TPC,
  long = turbo product code,
  short-plural = s
}
\DeclareAcronym{RLC}{
  short = RLC,
  long = random linear code,
  short-plural = s
}
\DeclareAcronym{MDS}{
  short = MDS,
  long = maximum distance separable,
  short-plural = s
}
\DeclareAcronym{CRC}{
  short = CRC,
  long = cyclic redundancy check,
  short-plural = s
}
\DeclareAcronym{BCH}{
  short = BCH,
  long = Bose–Chaudhuri–Hocquenghem,
  short-plural = s
}
\DeclareAcronym{BLER}{
  short = BLER,
  long = block error rate,
  short-plural = s
}
\DeclareAcronym{BI-AWGN}{
  short = BI-AWGN,
  long = binary-input additive white Gaussian noise,
  short-plural = s
}
\DeclareAcronym{SNR}{
  short = SNR,
  long = signal-to-noise ratio,
  short-plural = s
}
\DeclareAcronym{LLR}{
  short = LLR,
  long = log likelihood ratio,
  short-plural = s
}
\DeclareAcronym{SISO}{
  short = SISO,
  long = soft-input soft-output,
  short-plural = s
}

\DeclareAcronym{SO}{
    short = SO,
    long = soft-output
}

\DeclareAcronym{GRAND}{
  short = GRAND,
  long = guessing random additive noise decoding,
  short-plural = s
}

\DeclareAcronym{RLCode}{
  short = RLC,
  long = random linear code,
  short-plural = s
}

\DeclareAcronym{SOGRAND}{
  short = SOGRAND,
  long = soft-output GRAND,
  short-plural = s
}

\DeclareAcronym{LDPC}{
  short = LDPC,
  long = low-density parity-check,
  short-plural = s
}

\DeclareAcronym{VN}{
  short = VN,
  long = variable node,
  short-plural = s
}

\DeclareAcronym{CN}{
  short = CN,
  long = check node,
  short-plural = s
}

\DeclareAcronym{RS}{
  short = RS,
  long = Reed-Solomon,
  short-plural = s
}

\DeclareAcronym{AWGN}{
  short = AWGN,
  long = additive white Gaussian noise,
  short-plural = s
}

\DeclareAcronym{ISI}{
  short = ISI,
  long = intersymbol interference,
  short-plural = s
}

\DeclareAcronym{SDD}{
    short = SDD,
    long = separate detection and decoding,
    short-plural = s
}

\DeclareAcronym{JDD}{
    short = JDD,
    long = joint detection and decoding,
    short-plural = s
}

\DeclareAcronym{APP}{
    short = APP,
    long = a posteriori probability,
    short-plural = s,
    long-plural-form = a posteriori probabilities
}

\DeclareAcronym{ML}{
    short = ML,
    long = maximum likelihood,
    short-plural = s
}

\DeclareAcronym{GMI}{
    short = GMI,
    long = generalized mutual information,
    short-plural = s
}

\DeclareAcronym{GCD}{
    short = GCD,
    long = guessing codeword decoding,
    short-plural = s
}

\begin{document}

\title{Group Probability Decoding of Turbo Product Codes over Higher-Order Fields}

\author{Lukas Rapp,~\IEEEmembership{Graduate Student Member,~IEEE,}, Muriel Médard,~\IEEEmembership{Fellow,~IEEE}, and Ken R. Duffy,~\IEEEmembership{Senior~Member,~IEEE}
\thanks{This paper was presented in part at 2025 CISS~\cite{rappSOGRANDDecodingNonbinary}.}%
\thanks{L. Rapp and M. Médard are with the Massachusetts Institute of Technology Network Coding \& Reliable Communications Group (e-mails: {rappl, medard}@mit.edu).}%
\thanks{K. R. Duffy is with the Northeastern University Engineering Probability Information \& Communications Laboratory (e-mail: k.duffy@northeastern.edu).}%
\thanks{This work was supported by the Defense Advanced Research Projects Agency (DARPA) under Grant HR00112120008.}
}

\maketitle

\begin{abstract}
Binary \acp{TPC} are powerful error-correcting codes constructed from short component codes. Traditionally, turbo product decoding passes \acp{LLR} between the component decoders, inherently losing information when bit correlation exists. Such correlation can arise exogenously from sources like \acl{ISI} and endogenously during component code decoding. To preserve these correlations and improve performance, we propose turbo product decoding based on group probabilities. We theoretically predict mutual information and \ac{SNR} gains of group over bit-probability decoding. To translate these theoretical insights to practice, we revisit non-binary \acp{TPC} that naturally support group-probability decoding. We show that any component list decoder that takes group probabilities as input and outputs block-wise soft-output can partially preserve bit correlation, which we demonstrate with symbol-level ORBGRAND combined with \ac{SOGRAND}.
Our results demonstrate that group-probability-based turbo product decoding achieves \ac{SNR} gains of up to 0.3 dB for endogenous correlation and 0.7 dB for exogenous correlation, compared to bit-probability decoding. 
\end{abstract}

\begin{IEEEkeywords}
Product Codes, Turbo Decoding, Non-binary, Group Probabilities, List Decoding, SOGRAND, SO-GCD, Correlated Channel
\end{IEEEkeywords}

\acresetall

\section{Introduction}
An effective way to construct long channel codes is to concatenate short component codes, a principle first demonstrated by Elias with product codes~\cite{eliasErrorfreeCoding1954a} and later adopted in \ac{LDPC} codes by Gallager~\cite{gallagerLowdensityParitycheckCodes1962a} with single parity check component codes. Following the advent of turbo codes~\cite{berrouShannonLimitErrorcorrecting1993}, Pyndiah~\cite{pyndiahNearoptimumDecodingProduct1998} proposed soft-decision turbo decoding for product codes, which are then referred to as \acp{TPC}~\cite{mukhtarTurboProductCodes2016}.
The core element of Pyndiah's algorithm is the \ac{SISO} component decoder that estimates \ac{SO} information from a codeword list generated by Chase decoding~\cite{chaseClassAlgorithmsDecoding1972}. Recently, interest in list decoding has been renewed by CRC-assisted successive cancellation list (CA-SCL) decoding~\cite{talListDecodingPolar2015} for Polar Codes~\cite{arikanChannelPolarizationMethod2009}.
Recently, \ac{SOGRAND}, an \ac{SISO} decoders with improved \ac{SO}, has been proposed, resulting in enhanced \ac{TPC} performance that is competitive with \ac{LDPC} codes~\cite{yuanSoftoutputGRANDLong2023}. By viewing the operations of other decoders through the lens of \ac{GRAND}, the ability to provide accurate block-wise \ac{SO} has been further extended to SO-GCD~\cite{duffySoftOutputGuessingCodeword2025}, SO-SCL~\cite{yuanSoftOutputSuccessiveCancellation2025}, and SOCS~\cite{Janz25soft}.

In Pyndiah's algorithm and the following improvements, the component decoder generates a codeword list and assigns each codeword an \ac{APP} based on the input soft information. These \acp{APP} form a probability distribution of point masses in \(\{0, 1\}^n\), where the codebook \(\mathcal{C}\) introduces bit correlations. However, existing \ac{TPC} algorithms compute and exchange only bit-wise marginals of the codeword \acp{APP}, effectively transforming the distribution of point masses into a less informative product distribution.
For instance, consider the decoding of a repetition code with \ac{APP} estimates
\begin{equation}\label{eqn:example-repetition}
    \begin{aligned}
    c_1^n &= (0\hspace{0.2em}00\hspace{0.2em}0000000000\hspace{0.2em}0\hspace{0.2em}00), \quad &\text{with \(P(c_1^{n}) = 0.5\)},\\ 
    c_2^n &= (1\underbracket[.7pt]{11}_{\mathclap{\substack{P(00) = P(11)=0.5\\ P(01) = P(10)=0.0}}} 1111111111\underbracket[.7pt]{1}_{\mathclap{\substack{P(0) = 0.5\\P(1) = 0.5}}} 11), \quad &\text{with \(P(c_2^n) = 0.5\)}. 
    \end{aligned}
\end{equation}
Consequently, the probability of each bit being \(0\) is \(\nicefrac{1}{2}\) and the resulting product distribution assigns equal probability to all sequences in \(\{0, 1\}^{16}\) discarding any correlation between the bits. This correlation can be highly informative for the next component decoder as it reduces the \ac{APP} of incorrect codewords, promising improved decoding performance and convergence in fewer iterations. In addition to this decoder-induced \emph{endogenous correlation}, external effects like constellation bit mapping and \ac{ISI} can introduce \emph{exogenous correlation}, which the decoder can leverage.

Recently, Duffy et al.~\cite{duffyUsingChannelCorrelation2023a} proposed ORBGRAND-AI, a practical decoder that leverages exogenous bit correlation for short block codes. To achieve this, it computes the joint \acp{APP} over consecutive bit groups and carries out guesswork informed by these group \acp{APP}~\cite{anSoftDecodingSoft2022a}. For example, if we consider groups of two bits in \eqref{eqn:example-repetition}, we see that the sequences \(01\) and \(10\) have probability \(0\), which partially preserves the bit correlation introduced by the codebook. Inspired by this idea, we theoretically analyze to what extent group probabilities preserve bit correlation and improve decoding performance. Specifically, we compare the mutual information between the channel output and the group \acp{APP} with that between the channel output and bit \acp{APP}. This allows us to predict the decoding gains of ORBGRAND-AI under exogenous correlation for short block codes and to quantify how endogenous and exogenous bit correlation can benefit turbo product decoding.  

Motivated by the theoretical coding gains of group-probability decoding, we propose a practical channel coding scheme based on turbo product decoding to realize the predicted gains for long codes. Exchanging group probabilities between component decoders brings two requirements: first, the bits of each group need to belong to the same row and column; second, the \ac{SISO} component decoder must handle group probabilities as input and output. To fulfill the first requirement, we use non-binary instead of binary product codes, where each cell contains a group of bits.
For group-based \ac{SISO} component decoding, we demonstrate that any component decoder that takes group probabilities as input and outputs a codeword list with block-wise \acp{APP} can be used: group probabilities can be readily obtained by marginalizing the \acp{APP} over bit groups, as shown in \eqref{eqn:example-repetition}. Practical decoders that fulfill these requirements have recently been developed. In this paper, we use symbol-level ORBGRAND~\cite{anSoftDecodingSoft2022a, duffyUsingChannelCorrelation2023a}, based on \ac{GRAND} decoding~\cite{duffyCapacityAchievingGuessingRandom2019, duffyOrderedReliabilityBits2022}, and generate block-wise \ac{SO} using \ac{SOGRAND}~\cite{duffyOrderedReliabilityBits2022}. An alternative example is a symbol-level version of \ac{GCD}~\cite{maGuessingWhatNoise2024a}, where SO-GCD~\cite{duffySoftOutputGuessingCodeword2025} provides the block-wise \ac{SO}. We verify via Monte Carlo simulations that our scheme preserves bit correlation resulting in \ac{SNR} gains of up to \SI{0.3}{\dB} for endogenous correlation and \SI{0.7}{\dB} for exogenous correlation compared to bit-probability decoding.

This paper is structured as follows: Section~\ref{sec:theory} presents an information-theoretical analysis demonstrating how leveraging endogenous and exogenous correlation with group probabilities can improve decoding performance.
In Sec.~\ref{sec:non-binary-tpc}, we revisit non-binary \acp{TPC} as a practical way to achieve these performance gains and outline turbo product decoding with bit and group probabilities.
Section~\ref{sec:res-group-prob} presents numerical simulation results for group-probability decoding of non-binary \acp{TPC}. We use the following notation throughput this paper: \([n] \coloneqq \{1, \dots, n\}\) denotes the integers from \(1\) to \(n\), capitalized italic symbols such as \(\mathcal{A}\) denote sets, \(|\mathcal{A}|\) denotes the cardinality of a set, and \(j\) denotes the imaginary unit.

\section{Related Work}
To the best of our knowledge, previous research on non-binary \acp{TPC} only focuses on decoding with independent bit probabilities, leaving decoding with group probabilities to leverage bit correlation unexplored: turbo product decoding of non-binary \acp{TPC} with bit probabilities was first analyzed by Pyndiah in \cite{aitsabPerformanceReedSolomonBlock1996} to construct new \acp{TPC} with \ac{RS} codes and later improved with more powerful decoders in \cite{sweeneyIterativeSoftdecisionDecoding2000, zhouLowComplexityHighRateReed2007, rappSOGRANDDecodingNonbinary}.

Non-binary turbo decoding with group probabilities was previously studied for other code classes, e.g., non-binary \ac{LDPC} codes \cite{gallagerLowdensityParitycheckCodes1962a, daveyLowDensityParity1998, declercqDecodingAlgorithmsNonbinary2007, voicilaLowComplexityLowMemoryEMS2007, poulliatDesignRegularSub2008, toriyama2267Gb937pJBit2018, ferrazSurveyHighThroughputNonBinary2022} and non-binary convolutional turbo codes \cite{berrouShannonLimitErrorcorrecting1993, berkmannTurboDecodingNonbinary1998, berrouAdvantagesNonbinaryTurbo2001, livaShortTurboCodes2013, klaimiLowcomplexityDecodersNonbinary2018}.
Both non-binary \ac{LDPC} and convolutional codes can outperform their binary counterparts for moderate-to-short block length~\cite{livaShortTurboCodes2013, daveyLowDensityParity1998, poulliatDesignRegularSub2008}.
Additionally, non-binary codes offer the advantage that higher-order constellation symbols can be directly mapped to non-binary codeword symbols, avoiding the costly per-bit demapper~\cite{anSoftDecodingSoft2022a, declercqRegularGF2q, abdmoulehNewApproachOptimise2016}.

In this work, we utilize group probabilities to leverage exogenous bit correlation from \ac{ISI}. Optimal decoding of exogenous correlation can be realized with \ac{JDD}, which selects the codeword with maximal \ac{APP}. Since \ac{JDD} is often too complex, suboptimal solutions like turbo equalization~\cite{douillardIterativeCorrectionIntersymbol1995, koetterTurboEqualization2004}, successive interference cancellation~\cite{pfisterAchievableInformationRates2001, prinzSuccessiveInterferenceCancellation2024, wachsmannMultilevelCodesTheoretical1999}, \ac{SDD} and are used. \ac{SDD} offers low complexity and design flexibility by treating equalization and decoding as independent components, but can experience a significant performance loss.
Passing group instead of bit probabilities from the equalizer to the decoder reduces this loss~\cite{duffyUsingChannelCorrelation2023a} while keeping the detection and decoding separated. Previous work~\cite{mullerCapacityLossDue2004, pfisterAchievableInformationRates2001, arnoldInformationRateBinaryinput2001, kavcicCapacityMarkovSources2001} analyzed the achievable information rates for channels with memory and the capacity loss between \ac{SDD} and \ac{JDD} for bit probabilities~\cite{mullerCapacityLossDue2004}. In this work, we analyze how group-probability-based \ac{SDD} can reduce this loss.

\section{Group-Probability Decoding}\label{sec:theory}
This section presents an information-theoretical analysis of how effectively decoding with group probabilities preserves bit correlation and enhances decoding performance. We differentiate between \emph{exogenous} and \emph{endogenous} correlation between bits, where the \emph{exogenous} correlation is caused by the transmission and the channel itself, for example, via \ac{ISI} and higher order modulation.
The \emph{endogenous} correlation is extracted by the component decoder during iterative decoding.

\subsection{Exogenous Correlation}
\subsubsection{Channel Model}\label{sec:channel-model}
We introduce exogenous correlation via a channel with \ac{ISI}, over which ASK and QAM symbols are transmitted.
A \(2^\numbitpermodsymb\)-ary ASK and QAM constellation, which maps \(\numbitpermodsymb \in \fnatural\) codeword bits \(C_i\) to to \(2^\numbitpermodsymb\) modulation symbols \(X_i\), is defined as:
\begin{align*}
    \mathcal{X}_{2^\numbitpermodsymb,\text{ASK}} 
    &= 
    \left\{
        \pm 1 a, \pm 3 a, \pm 5 a, \dots, \pm (2^{\numbitpermodsymb-1} - 1) a
    \right\},\\
    \mathcal{X}_{2^\numbitpermodsymb,\text{QAM}} 
    &= 
    \left\{
        (x_\text{I} + j x_\text{Q}) a : x_\text{I}, x_\text{Q} \in \mathcal{X}_{2^{\numbitpermodsymb/2},\text{ASK}}
    \right\},
\end{align*}
respectively, where \(a \in \frealp\) is a constant to normalize the average energy of the respective constellation \(\mathcal{X}\) such that \(|\mathcal{X}|^{-1} \sum_{x \in \mathcal{X}} |x|^2 = 1\).
The codeword bits are mapped via a gray coding to ASK and QAM symbols, where, for QAM symbols, a Gray code first maps two groups of \(\numbitpermodsymb / 2\) to two ASK symbols, which are then combined to one complex symbol.

We consider a \ac{ISI} channel~\cite[Sec.~9]{proakisDigital2009} with two taps and \ac{AWGN} over which the modulation symbols \(X\) are transmitted, i.e., the output of the matched filter is
\begin{equation}\label{eqn:channel-model-isi}
    \widetilde{Y}_i = X_i + \rho X_{i-1} + \widetilde{Z}_i,
\end{equation}
where \(\widetilde{Z}_i\) is complex \ac{AWGN} noise with a noise variance of \(\widetilde{\sigma}^2\) per I- and Q-component and \(\rho \in (-1, 1)\).
As shown in Appendix~\ref{sec:gauss-markov-model-derivation}, the intersymbol interference can be removed with a linear equalizer~\cite[Sec.~9.4]{proakisDigital2009}, resulting in
\begin{equation}\label{eqn:channel-model-gm}
    Y_i = X_i + Z_i,
\end{equation}
where \(Z_i\) is complex Gauss-Markov noise as in \cite{duffyUsingChannelCorrelation2023a}: \(Z_i\) has zero mean and the I- and Q-component \(Z_{\text{I},i}\) and \(Z_{\text{Q},i}\) are independent and multivariate normal distributed with auto-covariance
\(
    E\{Z_{\text{I},i_1} Z_{\text{I},i_2}\} 
    = E\{Z_{\text{Q},i_1} Z_{\text{Q},i_2}\} 
    = \sigma^2 \rho^{|i_1-i_2|}
\)
with \(\sigma^2 = \frac{\widetilde{\sigma}^2}{1- \rho^2}\).
This corresponds to an \ac{SNR} of \(\lEsNO = (2 \sigma^2)^{-1} = r \numbitpermodsymb \cdot \lEbNO\), where \(r\) is the rate of the code.

\newcommand{\gstart}{s}
\newcommand{\gend}{e}

\subsubsection{Preprocessing}\label{sec:preprocessing}
To exploit the correlation between symbols \(Y_i\), we divide the equalizer output \(Y^n\) of a codeword \(C^n\) into consecutive, non-overlapping blocks of \(\symcell\) symbols:
\begin{align*}
    Y^n &= \left( Y_1, \dots, Y_\symcell \mid Y_{\symcell+1}, \dots, Y_{2 \symcell} \mid \dots \mid Y_{n - \symcell + 1}, \dots, Y_{n} \right) \\
    &= (Y_{\gstart_1}^{e_1}, \dots, Y_{\gstart_{n}}^{e_{n}}),
\end{align*}
where \(\gstart_i = i \symcell\) and \(\gend_i = (i+1) \symcell-1\). 
For a realization \(y^n\), the preprocessing scheme calculates the \emph{group probabilities}
\begin{equation}\label{eqn:A}
    \text{(gw)} \qquad 
    P_{X_{\gstart}^{\gend} | Y_{\gstart}^{\gend}}\left(
        x_{\gstart}^{\gend} | 
        y_{\gstart}^{\gend}
    \right)
    = 
    \frac{
        p_{Z_s^e}(x_s^e - y_s^e)
    }{
        \sum_{\tilde{x}_s^e \in \mathcal{X}^\symcell} p_{Z_s^e}(\tilde{x}_s^e - y_s^e)
    }
\end{equation}
conditioned on a window \(y_s^e\) of the received vector,
for all groups \((\gstart, \gend)\), and transmit sequences \(x_{\gstart}^{\gend} \in \mathcal{X}^{\symcell}\).
This preprocessing assumes that the groups are approximately independent~\cite{duffyUsingChannelCorrelation2023a}.
Since \(Z_i\) is a Gaussian Process, the group \(Z_{\gstart}^{\gend}\) is multivariate normal distributed with covariance matrix \(C_{ij} = \sigma^2 \rho^{|i-j|}\) for \(i, j \in [\symcell]\) with 
\[ 
    p_{Z_{\gstart}^{\gend}}(z^\symcell)
    = \frac{1}{\sqrt{(2 \pi)^{\symcell} |C|}}
    \exp\left(-\frac{1}{2} (z^\symcell)^T C^{-1} z^\symcell\right).
\]
These probabilities can be processed by decoders that take group probabilities as inputs, which we refer to as \emph{group-probability decoders}. 
Examples of such decoders include ORBGRAND-AI~\cite{duffyUsingChannelCorrelation2023a} based on Symbol GRAND~\cite{anSoftDecodingSoft2022a}, non-binary LDPC~\cite{daveyLowDensityParity1998} and turbo codes~\cite{berkmannTurboDecodingNonbinary1998}.

Practical communication schemes often decode soft information for individual bits rather than groups. 
We refer to such a decoder as a \emph{bit-probability decoder} (see Fig.~\ref{fig:preprocessing-diagram}).
\begin{figure}
    \centering
    \includegraphics{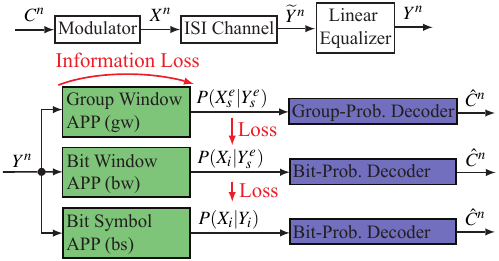}
    \caption{Preprocessing schemes: Preprocessing (gw) calculates the \ac{APP} of a group of modulation symbols \(X_s^e\) conditioned on a window of received symbols \(Y_s^e\). Preprocessing (bw) and (bs) treat individual modulation symbols as independent and can be used by a bit-probability decoder. As shown in Theorem~\ref{thm:information-rate}, the mutual information between channel input and preprocessing output decreases from processing scheme (gw) to (bw) and from (bw) to (bs).}
    \label{fig:preprocessing-diagram}
\end{figure}
The optimal preprocessing scheme for a bit-probability decoder that exploits the correlation of the same window \(y_s^e\) as preprocessing scheme (gw) in \eqref{eqn:A}, marginalizes the group probabilities of \eqref{eqn:A} to symbol probabilities%
\footnote{Pfister et al. refer to this scheme as a windowed \ac{APP}-detector~\cite{pfisterAchievableInformationRates2001}.}
\begin{equation}\label{eqn:B}
    \text{(bw)} 
    \qquad
    P_{X_i | Y_{\gstart}^{\gend}}\left(x_i | y_{\gstart}^{\gend}\right)
    =
    \sum_{t_s^e \in \mathcal{X}^\symcell: t_i = x_i}
    P_{X_{\gstart}^{\gend} | Y_{\gstart}^{\gend}}\left(
        t_{\gstart}^{\gend} | y_{\gstart}^{\gend}
    \right),
\end{equation}
for all groups \((\gstart, \gend)\) and \(x_i \in \mathcal{X}\), followed by a marginalization to bit probabilities using the gray mapping.

For comparison, we also consider a scheme that ignores any correlation between modulation symbols \(Y_i\)
\begin{equation}\label{eqn:C}
    \text{(bs)}
    \quad
    P_{X_i |Y_i}(x_i | y_i) =
    \frac{
        \exp\left((y_i - x_i)^2 / (2 \sigma^2)\right)
    }{
        \sum_{\tilde{x}_i \in \mathcal{X}}
        \exp\left((y_i - \tilde{x}_i)^2 / (2 \sigma^2)\right)
    }
\end{equation}
for all \(i \in [n]\) and \(x_i \in \mathcal{X}\).

\subsubsection{Mismatched Achievable Information Rate}
Decoding with a group- or bit-probability decoder in Fig.~\ref{fig:preprocessing-diagram} can be understood as a \emph{mismatched decoding}, where the decoder input is the product distribution of the probabilities in \eqref{eqn:A}, \eqref{eqn:B}, or \eqref{eqn:C} instead of the optimal \ac{APP} \(P(X^n | Y_{-\infty}^{+\infty})\) conditioned on the whole received sequence \(Y_{-\infty}^{+\infty}\), which reduces the capacity of the system.
An achievable information rate for a mismatched decoder is given by the \ac{GMI}~\cite{merhavInformationRatesMismatched1994, strasshoferSoftInformationPostProcessingChasePyndiah2023a}
\(I_{\tilde{s}}(X^n; Y_{-\infty}^{+\infty})\) for \(\tilde{s} \geq 0\).
For simplicity, we set \(\tilde{s}=1\) in our analysis as in~\cite{strasshoferSoftInformationPostProcessingChasePyndiah2023a}
and assume that the modulation symbols are independent and uniformly distributed.
In this case, the \ac{GMI} normalized per modulation symbol for each preprocessing becomes
\begin{align*}
    I_\text{gw} &\coloneqq 
    \frac{1}{\symcell} 
    I\left(
        X_{\gstart}^{\gend}; 
        Y_{\gstart}^{\gend}
    \right), \\
    I_\text{bw} &\coloneqq \frac{1}{\symcell} 
    \sum_{i = \gstart}^\gend
    I\left(
        X_i; Y_{\gstart}^{\gend}
    \right), 
    \quad
    I_\text{bs} \coloneqq I(X_1; Y_1),
\end{align*}
for all groups \((s, g)\) in \([n]\), where we used the fact that the process \(Y_i\) is strong stationary.
\begin{figure}
    \centering
    \includegraphics{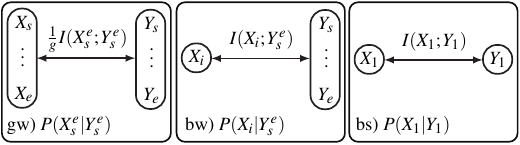}
    \caption{Calculation of the achievable information rates \(I_\text{gw}\), \(I_\text{bw}\), and \(I_\text{bs}\):
    Because \(Y_i\) is strong stationary the \ac{GMI} for \(s=1\) is equal to the mutual information between one group or symbol of the channel input and output.}
    \label{fig:capacity-analysis}
\end{figure}
The following theorem quantifies the loss in information rate between the preprocessing schemes:
\begin{theorem}\label{thm:information-rate}
    The information rates \(I_\textnormal{gw}\), \(I_\textnormal{bw}\) and \(I_\textnormal{bs}\) decreases from scheme (gw) to (bw) and from scheme (bw) to (gw), where
    \begin{align*}
        \Delta I_{\textnormal{gw-bw}} \coloneqq I_\textnormal{gw} - I_\textnormal{bw} &= \frac{1}{\symcell} \sum_{i=s+1}^e I(X_i; X_s^{i-1} | Y_s^e) \geq 0, \\
        \Delta I_{\textnormal{bw-bs}} \coloneqq I_\textnormal{bw} - I_\textnormal{bs} &= \frac{1}{\symcell} \sum_{i=s}^e I(X_i; Y_{\{s, \dots, e\} \setminus \{i\}}) \geq 0.
    \end{align*}
\end{theorem}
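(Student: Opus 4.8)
The plan is to derive both identities directly from the chain rule for mutual information, drawing on the two modeling assumptions stated just before the theorem — that the symbols $X_i$ are independent and uniform, and that the noise is independent of the input — together with the strong stationarity of $Y_i$ used to define $I_\text{bs}$. I would take the given reductions $g\,I_\text{gw}=I(X_s^e;Y_s^e)$, $g\,I_\text{bw}=\sum_{i=s}^e I(X_i;Y_s^e)$, and $I_\text{bs}=I(X_1;Y_1)$ as the starting point, so that the whole argument is a term-by-term comparison of mutual informations. Nonnegativity will then be automatic, since every quantity that survives is a (conditional) mutual information.

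For the first identity I would expand $g\,I_\text{gw}$ by the chain rule in the input coordinates,
\[
  I(X_s^e;Y_s^e)=\sum_{i=s}^{e} I\!\left(X_i;Y_s^e \mid X_s^{i-1}\right),
\]
with $X_s^{i-1}$ empty at $i=s$, and compare it summand by summand with $g\,I_\text{bw}$. The key step is to expand the joint mutual information $I\!\left(X_i;(X_s^{i-1},Y_s^e)\right)$ in two ways,
\[
  I(X_i;X_s^{i-1})+I\!\left(X_i;Y_s^e\mid X_s^{i-1}\right)=I(X_i;Y_s^e)+I\!\left(X_i;X_s^{i-1}\mid Y_s^e\right),
\]
and to invoke input independence to set $I(X_i;X_s^{i-1})=0$. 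Each summand of $g\,(I_\text{gw}-I_\text{bw})$ then collapses to $I\!\left(X_i;X_s^{i-1}\mid Y_s^e\right)$; the $i=s$ term vanishes because its conditioning set is empty, which shifts the lower limit to $s+1$ and gives exactly the claimed formula, nonnegative term by term.

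For the second identity I would first use strong stationarity to rewrite $I_\text{bs}=I(X_1;Y_1)$ as the group average $\tfrac{1}{g}\sum_{i=s}^{e} I(X_i;Y_i)$, so that $g\,(I_\text{bw}-I_\text{bs})=\sum_{i=s}^{e}\left[I(X_i;Y_s^e)-I(X_i;Y_i)\right]$. Splitting the window as $Y_s^e=(Y_i,Y_{\{s,\dots,e\}\setminus\{i\}})$ and applying the chain rule yields $I(X_i;Y_s^e)-I(X_i;Y_i)=I\!\left(X_i;Y_{\{s,\dots,e\}\setminus\{i\}}\mid Y_i\right)$, the per-symbol information gained from the remaining window symbols, which reproduces the stated summand once the conditioning on $Y_i$ is made explicit. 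This conditioning is in fact what makes the term nonzero: dropping it would give $0$, since $X_i$ is independent of the other inputs and of the noise, so the extra symbols help only jointly with $Y_i$ through the correlated noise; nonnegativity again follows from nonnegativity of conditional mutual information.

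The bookkeeping here is routine, so I expect no genuine obstacle. The only points needing care are identifying precisely where each hypothesis enters — input independence to annihilate $I(X_i;X_s^{i-1})$ in the first identity, and stationarity to symmetrize $I_\text{bs}$ into the group average in the second — and correctly pairing the conditioning in the two-way expansion of the joint mutual information, keeping the empty-index ($i=s$) convention consistent throughout.
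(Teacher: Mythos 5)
Your proof is correct and follows essentially the same route as the paper's: the paper works with conditional entropies, writing \(I_\textnormal{gw} = H(X_1) - \frac{1}{\symcell} H(X_s^e \mid Y_s^e)\) and its analogues using the independence of the \(X_i\), and then applies the entropy chain rule — which is precisely the entropy form of your mutual-information chain-rule bookkeeping, with input independence and stationarity entering at the same two places you identify. One detail where you are more careful than the source: in the second identity you correctly arrive at \(I(X_i; Y_{\{s,\dots,e\}\setminus\{i\}} \mid Y_i)\) and note that dropping the conditioning on \(Y_i\) would make the term vanish, since \(X_i\) is independent of the other inputs and of the noise; the paper's underbrace labels \(H(X_i \mid Y_i) - H(X_i \mid Y_s^e)\) as the unconditioned \(I(X_i; Y_{\{s,\dots,e\}\setminus\{i\}})\), which read literally is zero, so the conditional form you make explicit is the intended reading of the theorem.
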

\begin{proof}
    Since the modulation symbols \(X_i\) are i.i.d., 
    \(I_{\textnormal{gw}} = H(X_1) - \frac{1}{\symcell} H(X_s^e | Y_s^e)\),
    \(I_\textnormal{bw} = H(X_1) - \frac{1}{\symcell} \sum_{i=s}^e H(X_i | Y_s^e)\), and
    \(I_\textnormal{bs} = H(X_1) - H(X_1 | Y_1)\).
    Taking the difference of \(I_{\textnormal{gw}}\) and \(I_\textnormal{bw}\) and applying the chain rule results in
    \[
        I_{\textnormal{gw}} - I_\textnormal{bw} =
        \frac{1}{\symcell}
        \Bigg(
            \sum_{i=s}^e \underbrace{H(X_i | Y_s^e) - H(X_i | Y_s^e, X_1^{i-1})}_{= I(X_i; X_1^{i-1} | Y_s^e)}
        \Bigg).
    \]
    Because the process \(Y_i\) and \(X_i\) are jointly strong stationary, \(H(X_1 | Y_1) = H(X_i | Y_i)\) holds for all \(i\). Therefore,
    \[
        I_\textnormal{bw} - I_\textnormal{bs} = \frac{1}{\symcell} 
        \Bigg(
            \sum_{i=s}^e \underbrace{H(X_i | Y_i) - H(X_i | Y_s^e)}_{=I(X_i; Y_{\{s, \dots, e\} \setminus \{i\}})}
        \Bigg). \qedhere
    \]
\end{proof}
Preprocessing (bs) ignores channel correlation and hence, compared to preprocessing (bw), any information that \(Y_j\) for \(j \in \{\gstart, \dots, \gend\} \setminus \{i\}\) might have about symbol \(X_i\) is lost according to Theorem~\ref{thm:information-rate}.
Although preprocessing (gw) and (bw) both take the correlation of the same groups \(Y_s^e\) into account, the calculation of \acp{APP} for individual symbols \(X_i\) in \eqref{eqn:B} instead of groups \(X_s^e\) can lead to a loss of information rate compared to (gw).
Even though the transmitted symbols are independent and hence, \(I(X_i; X_j) = 0\) for \(i \neq j\), the correlation of symbols \(Y_i\) can introduce a dependency between \(X_i\) and \(X_j\) if \(Y_s^e\) is known (i.e., \(I(X_i; X_j | Y_s^e) > 0\)).
Preprocessing (gw) can partly maintain this dependency by calculating the probability of a group of symbols.

\subsubsection{Numerical Evaluation}
To quantify the loss in information rate in Theorem~\ref{thm:information-rate}, we approximate the information rates \(I_\text{gw}\), \(I_\text{bw}\) and \(I_\text{bs}\) for the Gauss-Markov channel \eqref{eqn:channel-model-gm} via a Monte Carlo integration, which is based on~\cite[Sec.~1.5.1.3]{ryan2009channel}.
This section considers ASK modulations \(\mathcal{X}\). Since the transmission of a QAM modulation \(\mathcal{X}_{2^\numbitpermodsymb,\text{QAM}}\) can be interpreted as the transmission of two independent ASK modulations \(\mathcal{X}_{2^{\numbitpermodsymb/2},\text{ASK}}\), the information rate \(I\) of the QAM modulation is twice the information rate of the respective ASK modulation.

\begin{theorem}\label{thm:information-rate-gm}
    The information rates between channel input and decoder input of the preprocessing schemes (gw), (bw), and (bs) for the Gauss-Markov channel \eqref{eqn:channel-model-gm} are
    \begin{align*}
        I_\textnormal{gw} &= \frac{1}{\symcell} (h(Y_s^e) - h(Z_s^e)), \\
        I_\textnormal{bw} &= \frac{1}{\symcell} \sum_{i=1}^\symcell \left(h(Y_s^e) - h(Y_s^e | X_i)\right), \;
        I_\textnormal{bs} = h(Y_1) - h(Z_1),
    \end{align*}
    with
    \begin{align*}
        &h(Y_s^e) = \symcell \log(|\mathcal{X}|) - E_{Y_s^e} \Big\{
            \log\Big(
                \sum_{x^\symcell \in \mathcal{X}^\symcell} p_{Z_{\gstart}^{\gend}}(Y_s^e - x^\symcell)
            \Big)
        \Big\}, \\
        &h(Y_s^e | X_i) =
        (\symcell-1) \log(|\mathcal{X}|) \\ &- E_{Y_s^e, X_i} \Big\{
            \log\Big(\sum_{t^\symcell \in \mathcal{X}^\symcell: t_i = X_i} p_{Z_{\gstart}^{\gend}}(Y_s^e - t^\symcell)\Big)
        \Big\}, \\
        &h(Z_s^e) =
        \frac{1}{2} \log\left(
            (2 \pi e)^\symcell (1-\rho^2)^{\symcell - 1}
        \right),
    \end{align*}
    where \(C\) is a matrix with \(C_{ij} = \sigma^2 \rho^{|i-j|}\) for \(i, j \in [\symcell]\).
    \(h(Y_1)\) and \(h(Z_1)\) can be calculated via \(h(Y_s^e)\) and \(h(Z_s^e)\) with \(\symcell=1\).
\end{theorem}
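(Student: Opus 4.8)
The plan is to reduce each information rate to a difference of differential entropies via \(I(A;B) = h(B) - h(B \mid A)\), and then to evaluate the resulting entropies for the Gauss--Markov statistics of \eqref{eqn:channel-model-gm}. First I would write \(I_\textnormal{gw} = \frac{1}{\symcell} I(X_\gstart^\gend; Y_\gstart^\gend) = \frac{1}{\symcell}\bigl(h(Y_\gstart^\gend) - h(Y_\gstart^\gend \mid X_\gstart^\gend)\bigr)\). Since \(Y_\gstart^\gend = X_\gstart^\gend + Z_\gstart^\gend\) with \(Z_\gstart^\gend\) independent of \(X_\gstart^\gend\), conditioning on \(X_\gstart^\gend\) merely shifts the noise density, and translation invariance of differential entropy gives \(h(Y_\gstart^\gend \mid X_\gstart^\gend) = h(Z_\gstart^\gend)\), yielding the claimed \(I_\textnormal{gw}\). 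The identical argument on the scalar channel gives \(I_\textnormal{bs} = h(Y_1) - h(Z_1)\). For \(I_\textnormal{bw} = \frac{1}{\symcell}\sum_{i=\gstart}^{\gend} I(X_i; Y_\gstart^\gend)\) I would expand each term as \(h(Y_\gstart^\gend) - h(Y_\gstart^\gend \mid X_i)\); here the conditional entropy does \emph{not} reduce to a noise entropy, because fixing a single input coordinate still leaves the remaining symbols random, so it must be retained as \(h(Y_\gstart^\gend \mid X_i)\).

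Next I would evaluate the three entropies explicitly. The vector \(Z_\gstart^\gend\) is zero-mean Gaussian with covariance \(C_{ij} = \sigma^2 \rho^{\lvert i - j \rvert}\), so \(h(Z_\gstart^\gend) = \frac{1}{2}\log\bigl((2\pi e)^\symcell \lvert C \rvert\bigr)\) and the work reduces to computing \(\lvert C \rvert\). For \(h(Y_\gstart^\gend)\) and \(h(Y_\gstart^\gend \mid X_i)\), the key observation is that, with \(X_\gstart^\gend\) uniform on \(\mathcal{X}^\symcell\), the law of \(Y_\gstart^\gend\) is a uniform mixture of \(\lvert \mathcal{X}\rvert^\symcell\) shifted Gaussians, \(p_{Y_\gstart^\gend}(y^\symcell) = \lvert \mathcal{X}\rvert^{-\symcell} \sum_{x^\symcell \in \mathcal{X}^\symcell} p_{Z_\gstart^\gend}(y^\symcell - x^\symcell)\). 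Writing \(h(Y_\gstart^\gend) = -E\bigl[\log p_{Y_\gstart^\gend}(Y_\gstart^\gend)\bigr]\) and pulling the constant \(\lvert \mathcal{X}\rvert^{-\symcell}\) out of the logarithm produces the \(\symcell \log(\lvert \mathcal{X}\rvert)\) term together with the stated expectation of a log-sum of Gaussians. Repeating the computation with only the \(\symcell - 1\) unconstrained coordinates marginalized gives \(h(Y_\gstart^\gend \mid X_i)\), now with the constraint \(t_i = X_i\) inside the sum and the prefactor \((\symcell - 1)\log(\lvert \mathcal{X}\rvert)\). These match the statement, and the residual expectations are precisely what the subsequent Monte-Carlo integration estimates.

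The one genuinely non-routine step is the determinant of the exponential-correlation (Kac--Murdock--Szeg\H{o}) matrix \(C\). I would evaluate it either by exhibiting its tridiagonal inverse --- \(C^{-1} = \frac{1}{\sigma^2 (1-\rho^2)}\) times the tridiagonal matrix with diagonal \((1, 1+\rho^2, \dots, 1+\rho^2, 1)\) and sub-/super-diagonal \(-\rho\) --- or by a cofactor recursion on \(\lvert C \rvert\) that telescopes to \(\lvert C \rvert = \sigma^{2\symcell}(1-\rho^2)^{\symcell-1}\). Substituting this into the Gaussian entropy formula gives \(h(Z_\gstart^\gend)\), and setting \(\symcell = 1\) recovers \(h(Z_1) = \frac{1}{2}\log(2\pi e\,\sigma^2)\) and thereby \(h(Y_1)\) and \(I_\textnormal{bs}\). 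The hard part is thus purely this determinant identity; the entropy decompositions, translation invariance, and mixture bookkeeping are all mechanical once \(\lvert C \rvert\) is known.
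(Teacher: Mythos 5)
Your proposal is correct and follows essentially the same route as the paper: decompose each rate as \(h(Y)-h(Y\mid\cdot)\), use the additive-noise/translation-invariance identity \(h(Y_\gstart^\gend\mid X_\gstart^\gend)=h(Z_\gstart^\gend)\), and evaluate \(p_{Y_\gstart^\gend}\) and \(p_{Y_\gstart^\gend\mid X_i}\) as uniform (respectively, coordinate-constrained) mixtures of shifted Gaussians, pulling the \(\lvert\mathcal{X}\rvert^{-\symcell}\) and \(\lvert\mathcal{X}\rvert^{-(\symcell-1)}\) prefactors out of the logarithm. The only cosmetic differences are that the paper obtains \(p_{Y_\gstart^\gend\mid X_i}\) via Bayes' rule from the group posterior rather than by direct marginalization over the unconstrained coordinates, and it cites the Gauss--Markov entropy formula rather than deriving \(\lvert C\rvert\) as you do; your determinant \(\sigma^{2\symcell}(1-\rho^2)^{\symcell-1}\) would in fact carry an extra \(\sigma^{2\symcell}\) factor inside the logarithm that the theorem's stated \(h(Z_\gstart^\gend)\) omits.
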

\begin{proof}
    See Appendix~\ref{sec:proof-theorem}.
\end{proof}

We evaluate the expected value in Theorem \ref{thm:information-rate-gm} efficiently with a Monte Carlo integration
\(
    E_{Y_{\gstart}^{\gend}}(g(Y_{\gstart}^{\gend}))
    =
    \frac{1}{L} \sum_{i=1}^L g(y_{\gstart,i}^{\gend}),
\)
where \(y_{\gstart,i}^{\gend}\) are realizations of \(Y_{\gstart}^{\gend}\). 
\begin{figure}
    \centering
    \includegraphics{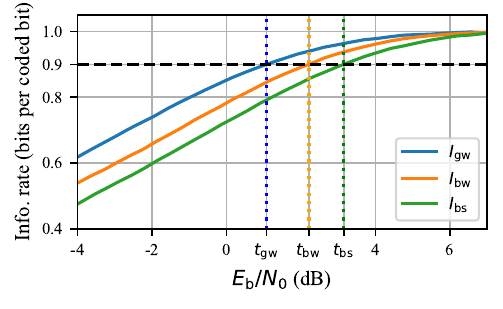}
    \caption{Achievable information rate \(I(\lEbNO)\) in bits per coded bit for each preprocessing scheme for a group size of \(\symcell=2\) modulation symbols and a Gauss-Markov channel with \(\rho=0.75\). The vertical lines indicate the \(\lEbNO\) threshold \((\lEbNO)^\star\) that is required for error-free communication of a long code of rate \(r=0.9\). The SNR gain between scheme (gw) and (bw) (\(t_\text{gw} - t_\text{bw}\)) and between (bw) and (bs) (\(t_\text{bw} - t_\text{bs}\)) can be found in Fig.~\ref{fig:snr-gain}.}
    \label{fig:inforate-curve}
\end{figure}
Figure~\ref{fig:inforate-curve} shows the information rates of Theorem~\ref{thm:information-rate-gm} over \(\lEbNO\) for a channel with \(\rho=0.75\) and group size \(\symcell=2\). As demonstrated in Theorem~\ref{thm:information-rate}, preprocessing scheme (bs) results in a loss of information compared to scheme (bw), which in turn loses information compared to scheme (gw).

This information loss translates into a loss in \(\lEbNO\) if the target \ac{BLER} is fixed: error-free communication with a long code of rate \(r\) is possible for preprocessing scheme \(\alpha \in \{\text{gw}, \text{bw}, \text{bs}\}\) if \(\lEbNO\) exceeds the \ac{SNR} threshold
\[
    (\lEbNO)^\star_\alpha \coloneqq \inf\{\lEbNO: I_\alpha(\lEbNO) \geq r\},
\]
where \(I_\alpha(\lEbNO)\) is the respective achievable information rate.
Figure~\ref{fig:snr-gain} lists the resulting gains \((\lEbNO)^\star_\text{bw} - (\lEbNO)^\star_\text{gw}\), \((\lEbNO)^\star_\text{bs} - (\lEbNO)^\star_\text{bw}\) for a code of rate \(r=0.9\).
\begin{figure*}
    \centering
    \subfloat[\(\lEbNO\) gain \((\lEbNO)^\star_\text{bw} - (\lEbNO)^\star_\text{gw}\) between scheme (gw) and (bw)]{
        \includegraphics{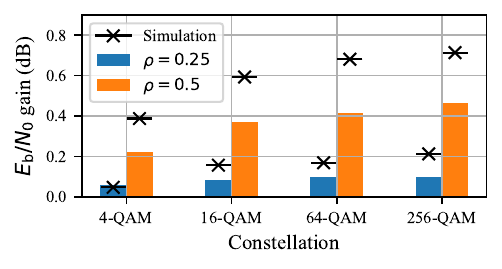}
        \label{fig:gainsAB}
    }
    \hfill
    \subfloat[\(\lEbNO\) gain \((\lEbNO)^\star_\text{bs} - (\lEbNO)^\star_\text{bw}\) between scheme (bw) and (bs)]{
        \includegraphics{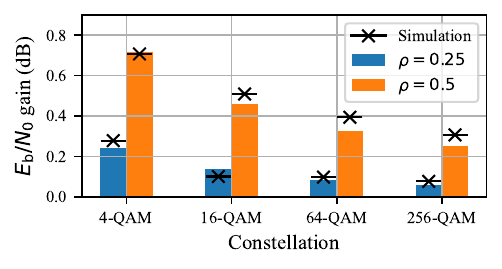}
        \label{fig:gainsBC}
    }
    \caption{\(\lEbNO\) gains between the preprocessing schemes for a group size \(\symcell=2\) symbols: The bars show the \(\lEbNO\) savings for error-free communication of a long code of rate \(r=0.9\). The black dashed lines show the actual gains for decoding an \((144, 130)\) \ac{RLCode} with ORBGRAND-AI.}
    \label{fig:snr-gain}
\end{figure*}
We compare the theoretical gains with the practical gains of ORBGRAND-AI, a group-probability decoder for small component codes~\cite{duffyUsingChannelCorrelation2023a}.
The black lines in Fig.~\ref{fig:inforate-curve} show the actual gains that ORBGRAND-AI achieves for the respective preprocessing scheme for a \((144, 130)\) \acf{RLCode}, i.e., a \((144, 130)\) linear code whose binary parity-check entries are sampled independently and uniformly at random.
The decoding gains closely follow the theoretical gains between preprocessing (bw) and (bs). 
While ORBGRAND-AI achieves higher gains than predicted between (gw) and (bw), the analysis correctly predicts the gain increase with constellation order.

\subsection{Endogenous Correlation}\label{sec:theo-endogenous-corr}
The previous section demonstrates that group probabilities mitigate information loss for exogenous bit correlation. Here, we show that the decoder introduces bit correlation on its own, which we refer to as \emph{endogenous correlation}.
If a component decoder is used for iterative decoding, message passing of group instead of bit probabilities, can mitigate the information loss caused by marginalization and improve decoding performance.

To analyze this effect, we consider a \ac{BI-AWGN} channel, i.e., \(\rho=0\) and \(\mathcal{X} = \{-1, +1\}\) in \eqref{eqn:channel-model-gm} that does not introduce any exogenous correlation between the codeword bits \(C^n\) (see Fig.~\ref{fig:siso-decoder}).
\begin{figure}
    \centering
    \includegraphics{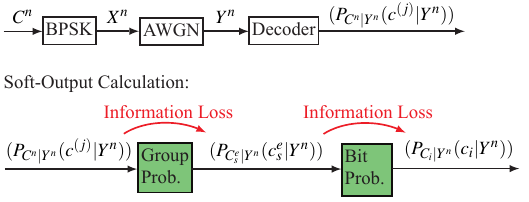}
    \caption{Information processing chain: information loss under group and bit marginalization:
    The decoder outputs a list of codewords and corresponding \acp{APP}. The codeword \acp{APP} are marginalized to group probabilities for consecutive groups of size \(\symcell\), followed by a marginalization to bit probabilities. According to the chain and the data-processing inequality~\cite{SWB-1644451026}, the mutual information between the decoder output and the probabilities can only remain constant or decrease from group to bit probabilities.}
    \label{fig:siso-decoder}
\end{figure}
A \ac{SISO} component decoder that outputs optimal block-wise \ac{SO} information calculates the \ac{APP} of each codeword \(c^n \in \mathcal{C}\) as follows:
\begin{equation} \label{eqn:decoder-optimal}
    P_{C^n|Y^n}(c^n | y^n)
    =
    \frac{
        P_{X^n | Y^n}(\text{BPSK}(c^n) | y^n)
    }{
        \sum_{t^n \in \mathcal{C}} P_{X^n | Y^n}(\text{BPSK}(t^n) | y^n)
    },
\end{equation}
where \(\text{BPSK}(\cdot)\) denotes the mapping from a binary to BPSK sequence and \(P_{X^n | Y^n}(x^n | y^n)\) is the statistic of the BI-AWGN channel.
Such a decoder can be approximated via list decoding, such as the Chase-Pyndiah algorithm~\cite{pyndiahNearoptimumDecodingProduct1998}, and recent improvements like SOGRAND~\cite{yuanSoftoutputGRANDLong2023}, SO-SCL~\cite{yuanSoftOutputSuccessiveCancellation2025}, SO-GCD~\cite{duffySoftOutputGuessingCodeword2025}, and SOCS decoding~\cite{Janz25soft}, which also provide a probability that the correct codeword is not in the list.

When the decoder is used as a component decoder for turbo product decoding with bit probabilities as in \cite{pyndiahNearoptimumDecodingProduct1998, duffySoftOutputGuessingCodeword2025, yuanSoftoutputGRANDLong2023, Janz25soft}, bit probabilities need to be extracted from the codeword list to be passed to the next row or column decoder.
A posteriori bit probabilities can be extracted from \eqref{eqn:decoder-optimal} via marginalization
\begin{equation}\label{eqn:decoder-marg-group}
    P_{C_i | Y^n}(c_i | y^n) 
    = \sum_{\mathclap{t^n \in \mathcal{C}: t_i = c_i}} P_{C^n|Y^n}(t^n | y^n).
\end{equation}

Motivated by the previous section, we consider iterative decoding with group probabilities, which marginalizes the output to group probabilities before passing them to the next component decoders:
\begin{equation}\label{eqn:decoder-marg-bit}
    P_{C_s^e | Y^n}(c_s^e| y^n) = \sum_{\mathclap{t^n \in \mathcal{C}: t_s^e = c_s^e}} P_{C^n|Y^n}(t^n | y^n).
\end{equation}
To see to what degree the group probabilities improve iterative decoding, we compare the mutual information per coded bit after the marginalizing step for group probabilities \(I_\text{g} = \frac{1}{\symcell} I(C_s^e; Y^n)\) with the mutual information for bit probabilities \(I_\text{b} = \frac{1}{\symcell} \sum_{i = s}^e I(C_i; Y^n)\) resulting in
\begin{equation}\label{eqn:info-loss-decoder}
    I_\text{g} - I_\text{b} = \frac{1}{\symcell} \sum_{i = s+1}^e I(C_i ; C_s^{i-1} | Y^n) \geq 0.
\end{equation}

To quantify the information loss in \eqref{eqn:info-loss-decoder}, we use a Monte Carlo simulation, which samples received vectors \(y^{(j), n}\) (\(j \in [L]\)) for \(L\) uniformly distributed codewords transmitted over a BI-AWGN channel as well as a \((n, k)\) \ac{RLCode} in every trial. \(I_\text{g}\) and \(I_\text{b}\) can then be approximated via
\begin{align*}
    I_\text{g}
    &= 1 - \frac{1}{\symcell} E_{Y^n} \{ H(C_s^e | Y^n = Y^n) \} \\
    &\approx
    1 - \frac{1}{\symcell L}
    \sum_{j = 1}^L H(C_s^e | Y^n = y^{(j),n})
    =
    1 + \frac{1}{\symcell L}
    \sum_{j = 1}^L \sum_{c^\symcell \in \{0, 1\}^\symcell} \\
    &\hspace{1.5em} P_{C_s^e | Y^n}(c^\symcell | y^{(j), n})
    \log(P_{C_s^e | Y^n}(c^\symcell | y^{(j), n})),\\
    I_\text{b} 
    &\approx 1 - \frac{1}{L} \sum_{j = 1}^L H_\text{b}(P_{C_i | Y^n}(0 | y^{(j), n})),
\end{align*}
where the probabilities are calculated for each received vector \(y^{(j), n}\) using \eqref{eqn:decoder-optimal}, \eqref{eqn:decoder-marg-group}, \eqref{eqn:decoder-marg-bit}, and the binary entropy function \(H_\text{b}(\cdot)\).

\begin{figure}
    \centering
    \includegraphics{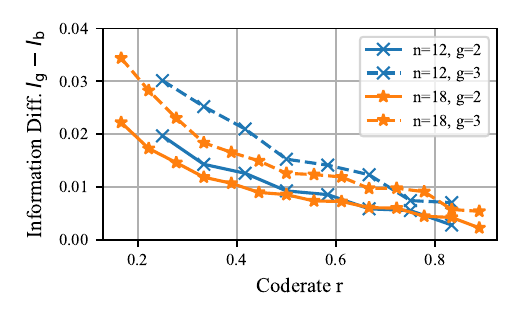}
    \caption{Difference in mutual information between codeword groups and channel output \(I_\text{g} = \frac{1}{\symcell} I(C_s^e ; Y^n)\), and between codeword bits and channel output \(I_\text{b} = \frac{1}{\symcell} \sum_{i = s}^e I(C_i ; Y^n)\): The information difference is evaluated for different \acp{RLCode} with different code parameters \((n, k)\).
    }
    \label{fig:info-difference-decoding}
\end{figure}

Fig.~\ref{fig:info-difference-decoding} shows numerical results of the gap between \(I_\text{g}\) and \(I_\text{b}\) for \(\lEbNO = \SI{3}{\dB}\)
evaluated for \acp{RLCode} of different rates and lengths.%
\footnote{We choose \(\lEbNO = \SI{3}{\dB}\) because the non-binary \acp{TPC}, used later to analyze the benefits of preserving endogenous correlation, achieve a target BLER between \num{e-2} and \num{e-3} at this \(\lEbNO\).}
As expected, the difference in mutual information increases with increasing group size \(\symcell\).
Notably, the difference increases as the rate of the code decreases, indicating that product codes with low-rate component codes benefit more from group-probability decoding than high-rate codes. In Sec.~\ref{sec:sim-endogenous-corr}, we provide empirical support for this observation by decoding product codes with component codes of varying rates.

This finding can be motivated by examining two boundary cases. Consider the binary code of length \(n\) of rate \(1\), which contains all binary vectors of length \(n\). Since the channel is memoryless, the codeword probabilities are the product distribution of bit probabilities, and marginalization to bit probabilities does not introduce any information loss.
In contrast, for a low-rate repetition code containing only the all-zero and all-one codeword, the codeword probabilities cannot, in general, be represented by a product distribution.
In this case, group probabilities are only positive for groups of identical bits, partially capturing the bit correlation lost with bit probabilities.
As the code becomes sparser, the discrepancy between codeword probabilities and product distributions grows, and the advantage of group probabilities becomes more significant.

\section{Non-binary Product Codes}\label{sec:non-binary-tpc}
A binary product code is a binary matrix, where a component code protects every row and column. These codes can be efficiently decoded with turbo product decoding~\cite{pyndiahNearoptimumDecodingProduct1998}, where \acp{LLR}, i.e., bit probabilities, are iteratively updated. To update group probabilities instead, the bits of each group must be part of the same row and column codeword. This can be achieved by placing bit groups, rather than individual bits, into the product code cells, as demonstrated in Fig.~\ref{fig:code-construction}. This section first presents the construction of these non-binary product codes, followed by the turbo product decoding algorithms.
We compare two cell reliability measures for decoding: \emph{bit probabilities} and \emph{group probabilities}.

\paragraph{Bit Probabilities}
Every bit in each cell is assigned \acp{LLR}, which are updated in several decoding iterations until a termination condition is fulfilled. This approach can be used if the correlation between bits is ignored, as in preprocessing (bw) and (bs) in \eqref{eqn:B} and \eqref{eqn:C}.

\paragraph{Group Probabilities}
Each product code cell is assigned group probabilities. Unlike the bit probability update, this approach captures both exogenous correlation from preprocessing (gw) in \eqref{eqn:A} and endogenous correlation from the component decoder.

\subsection{Code Construction}
\begin{figure}
    \centering
    \includegraphics{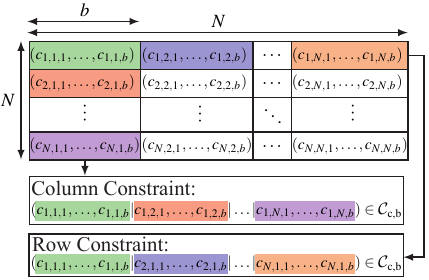}
    \caption{Non-binary product code with \(N\) rows and columns, and \(\bitcell\) bits per cell: For every row and column, the concatenation of the bits in each cell needs to form a valid codeword of the component code \(\cc_\text{c,\bitcell}\).}
    \label{fig:code-construction}
\end{figure}

Let \(b\) be the number of bits per product code cell corresponding to \(g = b / \ell\) consecutive modulation symbols per cell. Let \(\text{GF}(2^\bitcell)\) be an extension field over \(\text{GF}(2)\) and let the component code \(\ccc\) be a systematic \((N, K)\) linear block code  over \(\text{GF}(2^\bitcell)\).
Non-binary product codes \cite[Sec.~10.4]{moonErrorCorrectionCoding2005}
\(\ccc \times \ccc\) are \(N \times N\) matrices over \(\text{GF}(2^\bitcell)\) for which each row and each column is a codeword of a non-binary linear code \(\ccc\).
To encode these codes, first \(K^2\) elements of \(\text{GF}(2^\bitcell)\) are ordered in a \(K \times K\) matrix. Then, each row is encoded with the systematic code \(\ccc\), adding \(N-K\) parity \(\text{GF}(2^\bitcell)\)-elements per row. Next, each column is encoded with \(\ccc\), resulting in a \(N \times N\) matrix.  Owing to linearity, the last \(N-K\) rows are also codewords of \(\ccc\) \cite[Sec.~10.4]{moonErrorCorrectionCoding2005}. This construction results in a code of rate \(R = \left(K / N\right)^2\).

We reformulate the above definition in the binary domain.
Although some soft-input component list decoders like \ac{GRAND} and \ac{GCD} can directly decode the non-binary component code \(\ccc\), expressing the code in binary removes the need for Galois-field arithmetic, enabling the reuse of existing binary decoder hardware.
To transform the code, we represent the elements of \(\text{GF}(2^\bitcell)\) as binary vectors of length \(\bitcell\) using a basis of \(\text{GF}(2^\bitcell)\) over \(\text{GF}(2)\) \cite{macwilliamsTheoryErrorCorrecting1978}.
By mapping each symbol of a codeword of \(\ccc\) to its binary representation and concatenating them, we obtain a binary vector of length \(\bitcell N\). The set of the binary representations of all codewords forms an \((\bitcell N, \bitcell K)\) binary linear code \(\cccb\), known as the \emph{binary image} of \(\ccc\) \cite[Ch.~10. §5]{macwilliamsTheoryErrorCorrecting1978}.

The binary representations of the non-binary product code \(\ccc \times \ccc\) are \(N \times N\) matrices \(c^{N,N,\bitcell} \in \{0, 1\}^{N \times N \times \bitcell}\) where each entry \(c_{i, i, :} \in \{0, 1\}^\bitcell\) contains \(\bitcell\) bits.
A matrix \(c^{N, N,\bitcell}\) is a valid codeword of the product code if, for every row and column, the concatenated binary entries form a codeword of \(\cccb\):
\begin{align}
    &(c_{i, 1, 1}, \dots, c_{i, 1, \bitcell} \mid
    \dots \mid c_{i, N, 1}, \dots, c_{i, N, \bitcell}) \in \cccb,
    \label{eqn:code-construction-1} \\
    &(c_{1, j, 1}, \dots, c_{1, j, \bitcell} \mid
    \dots \mid c_{N, j, 1}, \dots, c_{N, j, \bitcell}) \in \cccb,
    \label{eqn:code-construction-2}
\end{align}
for each row \(i\) and column \(j \in \{1, \dots, N\}\), where \(c_{i, j, s}\) is the \(s\)-th bit of the product code cell \(c_{i, j}\).

Fig.~\ref{fig:code-construction} illustrates the code construction.
To encode a \(u^{K, K,\bitcell} \in \{0, 1\}^{K \times K \times \bitcell}\) matrix of information bits, the cells of each row are concatenated according to \eqref{eqn:code-construction-1}, encoded, and written back into the cells. This process is then repeated for all columns. The result is a valid codeword \(c^{N,N,\bitcell}\) that fulfills both constraints \eqref{eqn:code-construction-1} and \eqref{eqn:code-construction-2}. Encoding works because the underlying binary component code \(\cccb\) is the binary image of a linear code over \(\text{GF}(2^{\bitcell})\).
In contrast, if an arbitrary binary component code is used, the last \(N-K\) rows of the encoded matrix \(c^{N, N,\bitcell}\) are unlikely to be component codewords, which would result in an invalid product code codeword as demonstrated in Fig.~\ref{fig:counterexample-encoding}.

\begin{figure}
    \centering
    \begin{minipage}{0.45\linewidth}
        \includegraphics{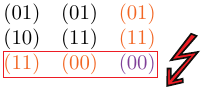}
    \end{minipage}
    \begin{minipage}{0.48\linewidth}
        \begin{equation*}
            G = 
            \begin{pmatrix}
                1 &  &  &  & 0 & 1\\
                 & 1 &  &  & 1 & 0\\
                 &  & 1 &  & 0 & 1\\
                 &  &  & 1 & 1 & 1
            \end{pmatrix}
        \end{equation*}
    \end{minipage}
    \caption{Encoding of a non-binary product code resulting in invalid codeword: The systematic generator matrix of the binary component code is shown on the right, and the encoded modified product code is shown on the left. Parity check bits are colored orange and purple (checks on checks). The last row is not a valid codeword because the checks on checks were encoded via column encodings.}
    \label{fig:counterexample-encoding}
\end{figure}

For our numerical results, we use the binary image of non-binary \ac{RS} and extended \ac{RS} codes: an \((N, K)\) \ac{RS} code is defined over an extension field \(\text{GF}(2 ^ \bitcell)\) with code
length \(N = 2 ^ \bitcell - 1\) and arbitrary information length \(K < N\). Singly extended RS codes add one parity check symbol
\(c_{N+1} = -\sum_{i=1}^N c_i\) at the end of each codeword \((c_1, \dots , c_N)\)
of an \((N,K)\) RS code resulting in an \((N +1,K)\) linear block
code.

\subsection{Turbo Product Decoding with Bit Probabilities}
Bit-probability turbo product decoding of non-binary \acp{TPC} conceptually follows Pyndiah's algorithm~\cite{pyndiahNearoptimumDecodingProduct1998}, adapted to account for the non-binary product code structure. The decoding process is outlined below for an arbitrary bit-probability \ac{SISO} decoder. Each bit \(c_{i, j, s}\) is assigned a channel, a priori and a-posteriori \ac{LLR} \(L_{\text{Ch}, i, j, s}\), \(L_{\text{A}, i, j, s}\) and \(L_{\text{APP}, i, j, s}\), respectively. 
Channel LLRs are obtained from equation \eqref{eqn:B} or \eqref{eqn:C} by marginalizing the modulation symbol probabilities to bit probabilities.
Initially, the a priori \acp{LLR} are set to \(L_{\text{A}, i, j, s} = 0\), assuming that the codeword bits are uniformly distributed at random.
The a priori \acp{LLR} are iteratively updated during \(I_\text{max}\) decoding iterations. Each decoding iteration consists of decoding all columns followed by decoding all rows. In the following, we refer to one column or row decoding as a \emph{half-iteration}. During one half-iteration, the following operations are performed:
\begin{itemize}[leftmargin=*]
    \item The channel and a priori \acp{LLR} of each \(i\)-th row or column are concatenated according to \eqref{eqn:code-construction-1} or \eqref{eqn:code-construction-2}, respectively, resulting in the vectors
    \(
        L^{\bitcell N,(i)}_{\text{Ch}}, L^{\bitcell N,(i)}_{\text{A}} \in \rb^{\bitcell N}
    \). For instance, to decode the \(i\)-th row, \(L^{\bitcell N,(i)}_{\text{Ch}}\) is
    \[
        (L_{\text{Ch}, i, 1, 1}, \dots, L_{\text{Ch}, i, 1, \bitcell}, \dots, L_{\text{Ch}, i, N, 1}, \dots, L_{\text{Ch}, i, N, \bitcell}).
    \]
    \item The sum of both vectors \(L^{\bitcell N,(i)}_{\text{Ch}} + L^{\bitcell N,(i)}_{\text{A}}\) is input into a bit-probability \ac{SISO} decoder, which outputs a vector of a posteriori \acp{LLR} \(L^{\bitcell N,(i)}_{\text{APP}}\). 
    \item The extrinsic information of the current row or column is calculated as
    \(
        L^{\bitcell N,(i)}_{\text{E}} = L^{\bitcell N,(i)}_{\text{APP}} - L^{\bitcell N,(i)}_{\text{Ch}} - L^{\bitcell N,(i)}_{\text{A}}
    \),
    which is used to update the priori \ac{LLR} vector
    \(L^{\bitcell N,(i)}_{\text{A}} \leftarrow \alpha L^{\bitcell N,(i)}_{\text{E}}\) using a dampening factor of \(\alpha \in [0, 1]\).
    \item The \acp{LLR} of the vectors \(L^{\bitcell N,(i)}_{\text{APP}}\) and \(L^{\bitcell N,(i)}_{\text{A}}\) are written back into the respective cells according to \eqref{eqn:code-construction-1} or \eqref{eqn:code-construction-2}.
\end{itemize}
At the end of each half iteration, the hard decision output is calculated as
\(
    w_{\text{hd}, i, j, s} = \ind_{\{L_{\text{APP}, i, j, s} < 0\}}.
\)
If \(I_\text{max}\) full iterations have passed or \(w_\text{hd}^{N,N,\bitcell}\) is a valid codeword, i.e., fulfills \eqref{eqn:code-construction-1} and \eqref{eqn:code-construction-2}, the decoding is terminated and \(w_\text{hd}^{N,N,\bitcell}\) is output.

Existing studies of bit-probability decoding of non-binary \acp{TPC} typically use Pyndiah's algorithm~\cite{pyndiahNearoptimumDecodingProduct1998} as \ac{SISO} decoder. In contrast, our analysis is based on \ac{SOGRAND} with 1-line ORBGRAND~\cite{duffyOrderedReliabilityBits2022, galliganBlockTurboDecoding2023} as it provides more accurate soft information and can handle any linear component code~\cite{yuanSoftoutputGRANDLong2023}.

\subsection{Turbo Product Decoding with Group Probabilities}\label{sec:turbo-group-probs}

\newcommand{\numgrouplist}{n_{\mathcal{B}}}

\begin{figure}
    \centering
    \includegraphics{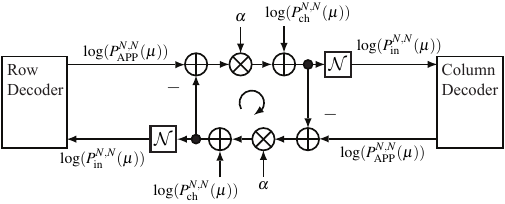}
    \caption{Turbo decoding of non-binary product codes}
    \label{fig:turbo-decoding-non-binary}
\end{figure}

Analogously to the previous section, we assign each cell \((i,j)\) group probabilities \(P_{\text{Ch}, i, j}(\mu)\), \(P_{\text{A}, i, j}(\mu)\) and \(P_{\text{APP}, i, j}(\mu)\), which are the channel, a priori and a-posteriori probability
of the group \(c_{i,j, :}\) in cell \((i,j)\) being equal to \(\mu \in \{0, 1\}^\bitcell\). The channel probabilities are obtained  from \eqref{eqn:A}:
\[
    P_{\text{Ch}, i, j}(\mu) \coloneqq P_{C_{i,j,:} | Y_{i,j,:}}(\mu | y_{i,j,:}), \quad \text{for \(\mu \in \{0, 1\}^\bitcell\),}
\]
where \(Y_{i,j,:}\) denotes the received \(g = b / \ell\) modulation symbols corresponding to product code cell \((i,j)\).
To reduce the storage and decoding complexity, only the top \(\numgrouplist \in \fnatural\) bit groups \(\mathcal{B}_{i,j} \subset \{0, 1\}^\bitcell\) with the highest channel probabilities \(P_{\text{Ch}, i, j}(\mu)\) are retained for each cell \((i, j)\)~\cite{anSoftDecodingSoft2022a}. The probabilities of the remaining symbols are set to \(0\) and \(P_{\text{Ch}, i, j}(\mu)\) is renormalized.
The a priori probabilities are initialized with a uniform distribution \(P_{\text{A}, i, j}(\mu) = 1 / \numgrouplist\) for \(\mu \in \mathcal{B}_{i,j}\) and \(0\) otherwise.
Analogously to \acp{LLR}, we represent the group probabilities as log probabilities for numerical stability. To ensure that the probabilities represented by the log probabilities sum up to \(1\), we normalize them before inputting them into the \ac{SISO} decoder. That means, for a vector of log probabilities \(v^n\), the normalization is
\[
    \mathcal{N}(v^n)_i \coloneqq v_i - \log\left(\sum_{i=1}^n \exp(v_i)\right), \quad \text{for \(i \in [n]\)},
\]
which can be efficiently calculated with the Jacobian Logarithm~\cite{erfanianReducedComplexitySymbol1994a}.

During each half iteration, columns and rows are alternately decoded. When rows are decoded, let
\(P_{\text{Ch}}^{N,(i)}\), \(P_{\text{A}}^{N,(i)}\), \(P_{\text{APP}}^{N,(i)}\) denote the \(i\)-th row of \(P_{\text{Ch}}^{N,N}\), \(P_{\text{A}}^{N,N}\), and \(P_{\text{APP}}^{N,N}\), respectively.
When columns are decoded, they denote the \(i\)-th column.
During a half iteration, the following operations are performed for the \(i\)-th row or column (see Fig.~\ref{fig:turbo-decoding-non-binary}):
\begin{itemize}
    \item Calculate the component decoder input 
    \begin{equation*}
        \log(\tilde{P}_{\text{in}}^{N,(i)}(\mu)) \coloneqq \log(P_{\text{Ch}}^{N,(i)}(\mu)) + \log(P_{\text{A}}^{N,(i)}(\mu)).
    \end{equation*}
    \item Normalize the log probabilities 
    \begin{equation}\label{eqn:symbol-input}
        \log(P_{\text{in},j}^{N,(i)}(\mu)) = \mathcal{N}(\log(\tilde{P}_{\text{in},j}^{N,(i)}(\mu)))
        \; \text{for \(j \in [N]\).}
    \end{equation}
    \item Decode the probability vector \(\log(P_{\text{in}}^{N,(i)}(\mu))\) with a group-probability \ac{SISO} decoder (Sec.~\ref{sec:group-prob-siso-decoder}) that returns the \ac{SO} vector \(\log(P_{\text{App},i}^N(\mu))\).
    \item Calculate the extrinsic probabilities
    \begin{multline*}
        \log(P_{\text{E}}^{N,(i)}(\mu))
        =
        \log(P_{\text{APP}}^{N,(i)}(\mu)) \\
        - \log(P_{\text{Ch}}^{N,(i)}(\mu)) 
        - \log(P_{\text{A}}^{N,(i)}(\mu))
    \end{multline*}
    and update
    \(
        \log(P_{\text{A}}^{N,(i)}(\mu)) \leftarrow \alpha \log(P_{\text{E}}^{N,(i)}(\mu))
    \) for the next half-iteration.
\end{itemize}
At the end of a half iteration, the hard decision output is calculated as
\[
    y_{\text{hd}, i, j} = \underset{\mu \in \{0, 1\}^\bitcell}{\argmax} \log(P_{\text{APP},i,j}(\mu)).
\]
Decoding terminates with \(y_{\text{hd}}^{N,N}\) as output when either \(I_\text{max}\) full iteration have been completed or \(y_{\text{hd}}^{N,N}\) is a valid codeword.

\subsection{Group-Probability \ac{SISO} Decoder}\label{sec:group-prob-siso-decoder}
Group-probability \ac{SISO} component decoding can be performed with any list decoder that takes group probabilities \(P_\text{in}^N(\mu)\) as input and outputs a list of potential component codewords \(\mathcal{L} \subset \mathcal{C}\) and block-wise \ac{SO}. The block-wise \ac{SO} consists of an \ac{APP} 
\(
    P_{C^N | P_\text{in}^N}\left(
        c^N \mid P_\text{in}^N
    \right)
\)
for each codeword \(c^N \in \mathcal{L}\) and a probability that the correct codeword is not in the list \(P_{C^N | P_\text{in}^N}(\mathcal{C} \setminus \mathcal{L}\mid P_\text{in}^N)\).
After list decoding, the \acp{APP} \(P^N_\text{APP}(\mu)\) are calculated for each group position \(i \in [N]\) and bit group \(\mu \in \mathcal{B}_i\), where \(\mathcal{B}_i \subset \{0, 1\}^\bitcell\) denotes the \(\numgrouplist\) bit groups considered for at position \(i \in [N]\) of the sequence (see Sec.~\ref{sec:turbo-group-probs}). 
The \acp{APP} \(P^N_\text{APP}(\mu)\) are a weighted sum of prior beliefs with the likelihood that the correct codeword is not contained in the list, with a sum of the codeword likelihoods in the list~\cite{yuanSoftoutputGRANDLong2023}:
\begin{multline*}
    P_{\text{APP},i}(\mu)
    =
    \sum_{c^n \in \mathcal{L}: c_i = \mu} 
    P_{C^N | P_\text{in}^N}\left(
        c^N \mid P_\text{in}^N
    \right) \\
    +
    P_{C^N | P_\text{in}^N}(\mathcal{C} \setminus \mathcal{L}\mid P_\text{in}^N)
    P_{\text{in},i}(\mu),
\end{multline*}
Yuan et al.~\cite{yuanSoftoutputGRANDLong2023} demonstrated for binary \acp{TPC} that the second weighting term improves the accuracy of the bit-wise \ac{SO}. However, even if the component decoder does not provide \(P_{C^N | P_\text{in}^N}(\mathcal{C} \setminus \mathcal{L}\mid P_\text{in}^N)\), e.g., in Pyndiah's original approach~\cite{pyndiahNearoptimumDecodingProduct1998}, \ac{SO} can still be calculated via classical marginalization by implicitly setting \(P_{C^N | P_\text{in}^N}(\mathcal{C} \setminus \mathcal{L}\mid P_\text{in}^N)\) to \(0\).

Recently, practical decoders have been developed that process group probabilities, such as symbol-level ORBGRAND~\cite{anSoftDecodingSoft2022a} combined with \ac{SOGRAND}~\cite{yuanSoftoutputGRANDLong2023}, or \ac{GCD} combined with the pattern generator of \cite{anSoftDecodingSoft2022a} and SO-GCD~\cite{duffySoftOutputGuessingCodeword2025}.

\subsubsection{Group-Probability \ac{SOGRAND} Decoding}\label{sec:group-prob-sogrand}
For the numerical results in this paper, we use a variant of \ac{SOGRAND} as the group probability \ac{SISO} decoder, which is outlined in the following.
Our decoder combines concepts of several \ac{GRAND}~\cite{duffyCapacityAchievingGuessingRandom2019} variants: symbol-level ORBGRAND~\cite{anSoftDecodingSoft2022a}, list decoding~\cite{abbasListGRANDPracticalWay2023}, and \ac{SOGRAND}~\cite{yuanSoftoutputGRANDLong2023}.
\ac{GRAND} decodes by subtracting noise patterns in decreasing order of probability from the received hard-decision sequence until a valid codeword is found. This principle allows \ac{GRAND} to decode any component code of moderate redundancy, including non-linear codes~\cite{cohenAESErrorCorrection2023}.  Variants have been developed for different channel models, including the binary-symmetric channel~\cite{duffyCapacityAchievingGuessingRandom2019}, soft-decision decoding~\cite{solomonSoftMaximumLikelihood2020b, duffyOrderedReliabilityBits2022, anSoftDecodingSoft2022a}, and correlated channel~\cite{duffyUsingChannelCorrelation2023a, anKeepBurstsDitch2022}. Practical implementations have been demonstrated through hardware syntheses~\cite{condoHighperformanceLowcomplexityError2021, condoFixedLatencyORBGRAND2022, abbasHighThroughputEnergyEfficientVLSI2022} and taped out chips~\cite{riazSub08pJ163GbpsMm22023, riazSub08pJ163GbpsMm22023}.

First, the hard-decision \(w_{\text{hd}}^N \in \mathcal{B}_1 \times \dots \times \mathcal{B}_N\) is calculated for each group \(i \in [N]\) as
\[
    w_{\text{hd},i} = \argmax_{\mu \in \mathcal{B}_i} \log(P_{\text{in},i}(\mu)).
\]
Next, for each group \(i \in [N]\), and for all symbols \(\mu \in \mathcal{B}_i \setminus \{w_{\text{hd},i}\}\), the \ac{LLR} 
\[
    \delta_i(\mu) \coloneqq 
    \log(P_{\text{in},i}(w_{\text{hd},i})) - \log(P_{\text{in},i}(\mu))
\]
is calculated using the input probabilities defined in \eqref{eqn:symbol-input}.%
\footnote{In \cite{anSoftDecodingSoft2022a}, \(\delta_i(\mu)\) are referred to as exceedance distances measuring the distance between a modulation symbol and the hard-decision modulation symbol. For an AWGN channel, these distances can be interpreted as \acp{LLR} as used in this paper.}
Symbol-level ORBGRAND takes these \acp{LLR} \(\delta_i(\mu)\) and hard-decision \(w_{\text{hd}}^N\) as input and iterates efficiently over all possible sequences \(w^{N,(j)} \in \mathcal{B}_1 \times \dots \times \mathcal{B}_N\) in decreasing order of their probability
\[
    P_{Y^N | P_\text{in}^N}(w^{(j),N} | P_\text{in}^N)
    \coloneqq
    \prod_{i =1}^N P_{\text{in},i}(w_{i}^{(j)})
\]
(see~\cite{anSoftDecodingSoft2022a} for details), where \(j\) is the index of the guessing order. We use 1-line ORBGRAND as the pattern generator~\cite{duffyOrderedReliabilityBits2022, galliganBlockTurboDecoding2023} in symbol-level ORBGRAND, as it offers a more accurate approximation of the \acp{LLR} statistics during turbo product decoding than basic ORBGRAND~\cite{galliganBlockTurboDecoding2023}.

Let \(H^{n-k,n} \in \{0, 1\}^{(n-k) \times n}\) be the binary parity check matrix of the component code \(\cccb\).
For each sequence \(w^{N,(j)}\), the following steps are performed:
\begin{itemize}[leftmargin=*]
    \item Concatenate the groups of the guess \(w^{N,(j)} \in (\{0, 1\}^b)^N\) to a binary vector \(w^{N \bitcell, (j)}\). If \(w^{N \bitcell, (j)}\) is a codeword, i.e., 
    \[
        H^{n-k,n} w^{N \bitcell, (j)} = 0^{n-k},
    \] 
    add \(w^{N, (j)}\) to the list \(\mathcal{L}\).
    \item An estimate of the probability that the correct codeword is not in the list is calculated~\cite{yuanSoftoutputGRANDLong2023}
    \[
        P_{C^N \mid P_\text{in}^N}(\mathcal{C} \setminus \mathcal{L} \mid P_\text{in}^N) 
        \coloneqq
        \frac{P(A)}{
            \sum\limits_{c^N \in \mathcal{L}} 
            P_{Y^N | P_\text{in}^N}\left(
                c^N \mid P_\text{in}^N
            \right) + P(A)
        },
    \]
    where
    \[
        P(A) \coloneqq \Bigg(1 - \sum_{u=1}^{j} 
            P_{Y^N | P_\text{in}^N}\left(
                w^{N,(u)} | P_\text{in}^N
            \right)
        \Bigg) \frac{2^{bK} - 1}{2^{bN} - 1}
    \]
    incorporates the noise probabilities from guess \(1\) to \(j\).
    \item If \(\mathcal{L}\) contains \(n_\mathcal{L}\) codewords or \(P_{C^n \mid P_\text{in}^N}(\mathcal{C} \setminus \mathcal{L} \mid P_\text{in}^N) \) is lower than a threshold \(T \in [0, 1]\), list decoding terminats early because the correct codeword is in the list with probability greater than \(1-T\).
\end{itemize}
After list decoding, the codeword \acp{APP} of the block-wise \ac{SO} are calculated as~\cite{yuanSoftoutputGRANDLong2023}
\[
    P_{C^N | P_\text{in}^N}\left(
        c^N \mid P_\text{in}^N
    \right)
    \coloneqq
    \frac{
        P_{Y^N \mid P_\text{in}^N}(
            c^N \mid P_\text{in}^N
        )
    }{
        \sum_{\widetilde{c}^N \in \mathcal{L}} 
        P_{Y^N \mid P_\text{in}^N}(
            \widetilde{c}^N \mid P_\text{in}^N
        ) + P(A)
    }.
\]

\section{Results}\label{sec:res-group-prob}
This section presents simulation results for the decoding performance of non-binary \acp{TPC} for endogenous and exogenous correlation.
The maximal list size is \(n_\mathcal{L} = 4\), \(T = \num{e-4}\) and \(\alpha=0.5\).

\subsection{Endogenous Correlation}\label{sec:sim-endogenous-corr}
The analysis in Sec.~\ref{sec:theo-endogenous-corr} demonstrates that using bit instead of group probabilities results in an information loss during the first decoding iteration that increases with decreasing rate of the component code. To demonstrate how this loss affects the decoding, we constructed non-binary \acp{TPC} based on the \ac{RS} \((7, 3)\), \((7,4)\), and \((7, 5)\) component codes. All component codes have the same length but differ in rate and result in a non-binary \ac{TPC} with \(\bitcell=3\) bits per cell. As in Sec.~\ref{sec:theo-endogenous-corr}, the codes are transmitted over a \ac{BI-AWGN} channel, i.e., the received bits are uncorrelated. 

Figure~\ref{fig:endogenous-result} shows the \ac{BLER} performance for different numbers of half iterations. Each curve corresponds to a different number of half iterations after which decoding terminates. The simulation results show that decoding with group probabilities improves the decoding performance compared to bit-probability decoding if the rate of the component code is sufficiently low. Consistent with our analysis, the gains increase as the rate of the component code decreases.
The component decoder in the first half iteration introduces endogenous correlation between the bits, which is captured by the group probabilities and facilitates the component decoding in the second iteration, thereby improving the \ac{BLER} performance.
After convergence, group-probability decoding achieves gains of \SI{0.29}{\dB} and \SI{0.21}{\dB} over bit-probability decoding for the RS (7, 3) \ac{TPC} and RS (7, 4) \ac{TPC}, respectively, at a \ac{BLER} of \num{e-4}.
\begin{figure*}
    \centering
    \subfloat[RS (7, 3) component code.]{
        \includegraphics[height=0.23\linewidth]{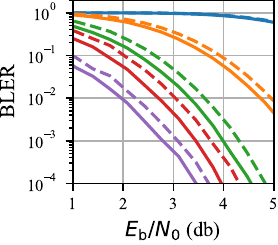}
        \label{fig:rs73}
    }
    \subfloat[RS (7, 4) component code.]{
        \includegraphics[height=0.23\linewidth]{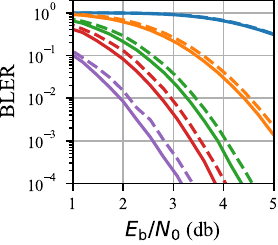}
        \label{fig:rs74}
    }
    \subfloat[RS (7, 5) component code.]{
        \includegraphics[height=0.23\linewidth]{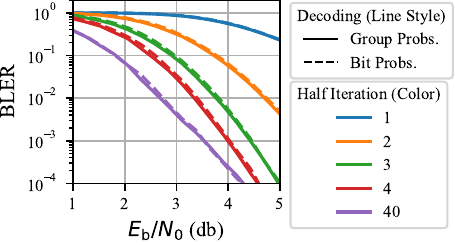}
        \label{fig:rs75}
    }
    \caption{Turbo decoding performance of different non-binary \ac{TPC} with group-probability vs. bit-probability decoding transmitted over an \ac{BI-AWGN}.}
    \label{fig:endogenous-result}
\end{figure*}

\subsection{Exogenous Correlation}
Next, we demonstrate how \emph{exogenous correlation} from correlated channels can improve decoding performance for non-binary \ac{TPC}. For this experiment, we choose a \ac{TPC} based on a \((16, 14)\) \ac{RS} component code with \(\bitcell=4\) bits per group. The \acp{TPC} are decoded with at max \(I_\text{max} = 10\) full decoding iterations.
The bits of each group are assigned either to \(\symcell=2\) 4-QAM or 4-ASK symbols. As discussed in Sec.~\ref{sec:theory}, the 4-ASK modulation theoretically achieves the same \ac{SNR} gains as the 16-QAM when transmitted over a correlated channel due to the independence of the I- and Q-component. This property allows us to effectively analyze 4-QAM and 16-QAM transmission using the same \ac{TPC}. In practice, a 16-QAM can be protected by the given \ac{TPC} by assigning the I- and Q-component of each symbol to different cells.

\begin{figure}
    \centering
    \includegraphics[clip, trim=0pt 16pt 0pt 0pt]{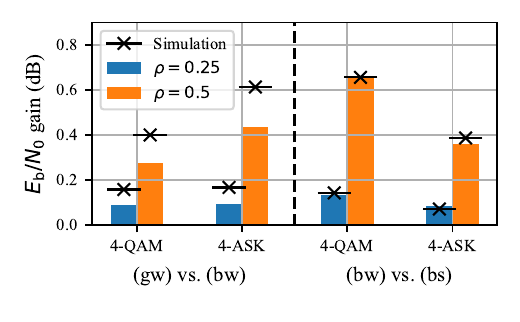}
    \caption{Non-binary \ac{TPC} decoding with exogenous correlation: The plot compares the calculated vs. simulated gains for the decoding of a non-binary \ac{TPC} based on an RS \((16, 14)\) code. The \ac{TPC} is either transmitted as 4-QAM or 4-ASK symbols, where each code cell is assigned \(\symcell=2\) consecutive modulation symbols.}
    \label{fig:tpc-snr-gain}
\end{figure}

\begin{figure*}
    \centering
    \subfloat[4-QAM Modulation]{
        \includegraphics[height=0.23\linewidth]{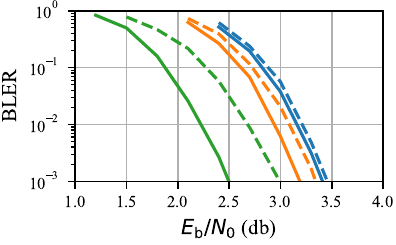}
        \label{fig:second_figure}
    }
    \hfill
    \subfloat[4-ASK Modulation]{
        \includegraphics[height=0.23\linewidth]{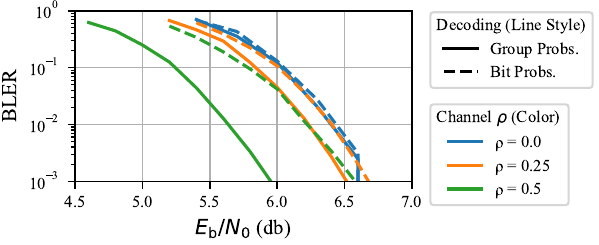}
        \label{fig:first_figure}
    }
    \caption{Non-binary \ac{TPC} decoding with exogenous correlation: The plot shows \ac{BLER} performance of a non-binary \ac{TPC} based on an RS (16, 14) component code transmitted over a Gauss-Markov channel with either a 4-QAM or 4-ASK modulation. The Gauss-Markov channel introduces a correlation between the received symbols parametrized by \(\rho\). The \ac{TPC} is either decoded with bit-probability (dashed) or group-probability decoding (solid). Group probabilities capture part of the correlation between coded bits, which improves decoding performance compared to correlation-free bit probabilities.}
    \label{fig:tpc-exogenous-correlation}
\end{figure*}

Figure~\ref{fig:tpc-exogenous-correlation} shows the \ac{BLER} performance of the non-binary \ac{TPC} with either group-probability or bit-probability decoding.
The channel probabilities of both decoding schemes calculate the windowed \ac{APP} of the two modulation symbols of each cell (i.e., preprocessing scheme (gw) and (bw)).
Group probability decoding outperforms bit-probability decoding by up to \SI{0.6}{\dB} even though the \ac{APP} estimator of both schemes exploits the correlation of the same window of modulation symbols. This result demonstrates the advantage of non-binary \ac{TPC} for correlated channels: it achieves additional performance gains that are not possible with a binary code while maintaining the simplicity of a \ac{SDD} scheme.

Figure~\ref{fig:tpc-snr-gain} shows the gains in \(\lEbNO\) at a target \ac{BLER} of \num{e-3} and compares them with the theoretical results in Fig.~\ref{fig:snr-gain}.
In addition to the comparison between group and bit-probability decoding with processing (gw) and (bw), it also compares the performance of bit-probability decoding of the windowed \ac{APP} estimator output (bw) with the bit-probability decoding with the linear equalizer output (bs), which ignores the correlation of the noise samples \(Z_i\) in \eqref{eqn:channel-model-gm} completely.
As predicted, the gains of group-probability decoding increase with increasing modulation order.

\section{Conclusion}
From a theoretical and practical perspective, we demonstrated the advantages of group probabilities over bit probabilities in turbo product decoding. Our theoretical analysis reveals that group-probability decoding retains critical correlations between codeword bits that are otherwise lost in traditional bit-probability approaches. These include exogenous correlation introduced by the channel and endogenous correlation introduced by a component decoder during iterative decoding.

We revisited non-binary \acp{TPC} as a practical code structure for group-probability decoding and to validate our theoretical results. Simulation results confirm the key findings of our analysis: for exogenous correlation, gains increase with modulation order and reach up to \SI{0.3}{\dB}; for endogenous correlation, gains increase with decreasing component code rate and reach up to \SI{0.7}{\dB}. These developments are enabled by the advent of new \ac{SISO} decoder such as \ac{SOGRAND}, which open up a broader class of codes that have been previously unexplored. In particular, the recently developed non-binary ORBGRAND chip~\cite{Kizilates25ORBRGANDAI} demonstrates the practicality of group-probability decoding for non-binary \acp{TPC}. Given that \acp{TPC} naturally support low-rate component codes, they are a compelling candidate for further exploration in low-rate applications.

{\appendices
\section{Derivation Gauss-Markov Model}\label{sec:gauss-markov-model-derivation}
To derive the linear equalizer, we apply the Z-transform to \eqref{eqn:channel-model-isi}, resulting in
\[
    \ztrafo{\widetilde{Y}} = \ztrafo{X} (1-\rho z^{-1}) + \ztrafo{\widetilde{Z}}.
\]
The \ac{ISI} can be compensated with the filter \(F(z) \coloneqq (1-\rho z^{-1})^{-1}\):
\[
    \ztrafo{Y} = F(z) \ztrafo{\widetilde{Y}}
    =
    \ztrafo{X} + F(z) \ztrafo{\widetilde{Z}}.
\]
By transforming back to the time domain, we obtain \cite{oppenheimDiscretetimeSignalProcessing1999}
\[
    Y_i = X_i + \sum_{j=0}^\infty \widetilde{Z}_{i-j} \rho^j = X_i + Z_i,
\]
where \(Z_i = \sum_{j=0}^\infty \widetilde{Z}_{i-j} \rho^j\).
Since I- and Q-component of \(\widetilde{Z}_i\) are independent, they are also independent for \(Z_i\).
The I- and Q-component of the sequence \(Z\) are each Gaussian processes with zero-mean and auto covariance 
\begin{align*}
    &E\{Z_{\text{I},i_1} Z_{\text{I},i_2}\} 
    = E\{Z_{\text{Q},i_1} Z_{\text{Q},i_2}\} \\
    &=
    \sum_{j_1 = 0}^\infty \sum_{j_2 = 0}^\infty 
    \underbrace{E\{ \widetilde{Z}_{\text{I},i_1-j_1} \widetilde{Z}_{\text{I},i_2-j_2} \}}_{= \widetilde{\sigma}^2 \delta_{i-j_1-(k-j_2)}} 
    \rho^{j_1} \rho^{j_2}
    =
    \underbrace{\frac{\widetilde{\sigma}^2}{1- \rho^2}}_{\eqqcolon \sigma^2}
    \rho^{|i_1-i_2|}
\end{align*}
In the last step, we used that the noise samples \(\widetilde{Z}_{\text{I},i}\) are independent for different times \(i\) and the geometric series. The independence results in a Kronecker delta \(\delta_{\tau}\), which is one if \(\tau=0\), and zero otherwise.

\section{Proof of Theorem~\ref{thm:information-rate-gm}}\label{sec:proof-theorem}
\begin{proof}
    As established in Sec.~\ref{sec:preprocessing}, the group \(Z_{\gstart}^{\gend}\) is multivariante normal distributed with covariance matrix \(C_{ij} = \sigma^2 \rho^{|i-j|}\) for \(i, j \in [\symcell]\). The differential entropy of this distribution is \cite{duffyUsingChannelCorrelation2023a}
    \[
        h(Z_s^e) =
        \frac{1}{2} \log\left(
            (2 \pi e)^\symcell (1-\rho^2)^{\symcell - 1}
        \right).
    \]
    \(h(Y_s^e)\) is equal to \(-E_{Y_s^e}(\log(p_{Y_s^e}(Y_s^e)))\) by definition. Using the law of total probability, we get
    \[
        p_{Y_s^e}(Y_s^e)
        = \frac{1}{|\mathcal{X}|^\symcell} \sum_{x^\symcell \in \mathcal{X}^\symcell} p_{Z_{\gstart}^{\gend}}(Y_s^e - x^\symcell),
    \]
    and the term for \(h(Y_s^e)\) in the theorem follows directly.
    Likewise, 
    \begin{align*}
        h(Y_s^e | X_i) 
        &= -E_{Y_s^e, X_i} \left\{
            \log(p_{Y_s^e | X_i}(Y_s^e | X_i))
        \right\},
    \end{align*}
    where
    \begin{align*}
        p_{Y_s^e | X_i}(Y_s^e | X_i) 
        &= P_{X_i | Y_s^e}(X_i | Y_s^e) p_{Y_s^e}(Y_s^e) / P_{X_i}(X_i) \\
        &= 
        \sum_{\mathclap{t^\symcell \in \mathcal{X}^\symcell: t_i = X_i}}
        P_{X_s^e | Y_s^e}(t^\symcell | Y_s^e) p_{Y_s^e}(Y_s^e) / P_{X_i}(X_i) \\
        &= \frac{1}{|\mathcal{X}|^{\symcell-1}} \sum_{t^\symcell \in \mathcal{X}^\symcell: t_i = X_i} p_{Z_{\gstart}^{\gend}}(Y_s^e - t^\symcell). \qedhere
    \end{align*} 
\end{proof}

\end{document}